\newtheorem{Th}{Theorem}
\newtheorem{Cor}{Corollary}
\begin{document}

\thispagestyle{empty}

\title[Capture into parametric autoresonance in the presence of noise]
{Capture into parametric autoresonance in the presence of noise}

\author{Oskar Sultanov}

\address{Oskar Sultanov,
\newline\hphantom{iii} Institute of Mathematics, Ufa Scientific Center, Russian Academy of Sciences,
\newline\hphantom{iii}  112, Chernyshevsky str., Ufa, Russia, 450008}
\email{oasultanov@gmail.com}

\maketitle {\small
\begin{quote}
\noindent{\bf Abstract.}
System of differential equations describing the initial stage of the capture of oscillatory systems into the parametric autoresonance is considered. Of special interest are solutions whose amplitude increases without bound with time.
The possibility of capture the system into the autoresonance is related with the stability of such solutions.
We study the stability of autoresonant solutions with respect to persistent perturbations of white noise type, and we show that under certain conditions on the intensity of the noise, the capture into parametric autoresonance is preserved with probability tending to one.

\medskip

\noindent{\bf Keywords: }{nonlinear oscillations, autoresonance, perturbation, white noise, stability}

\medskip
\noindent{\bf Mathematics Subject Classification: }{34C15, 70K20, 37B25}
\end{quote}
}

\section*{Introduction}

Autoresonance is the phenomenon of continuous phase locking of nonlinear oscillator with slowly varying parametric pumping that leads to significant growth of the energy of the oscillator. This phenomenon was first suggested in the problem of acceleration of relativistic particles~\cite{V44,M45}. Later, it was observed that autoresonance occurs widely in nature and plays the important role in many problems of nonlinear physics~\cite{FGF00,FF01,LF09}. A wide range of applications requires the study of the effect of perturbations on different mathematical models of autoresonance. The influence of additive noise on the capture into the autoresonance (non-parametric) was analysed in~\cite{BFSSh09}, where the effect of perturbations was considered only at the initial time. The problem of capture into the parametric autoresonance for a quantum anharmonic oscillator with initial disturbances was discussed in~\cite{BF14}. The effect of persistent perturbations with random jumps on the stability of autoresonance models was investigated in~\cite{OS14}. In this paper we consider the deterministic model of parametric autoresonance~\cite{KM01} and we study the effect of persistent perturbations of white noise type on the captured solutions.

The paper is organized as follows. In section~\ref{sec0}, we give the mathematical formulation of the problem.
In section~\ref{sec1} we discuss the stability of autoresonant solutions with respect to perturbations of initial data. Section~\ref{sec2} deals with a more general problem of stochastic stability of a class of locally stable dynamical systems. In section~\ref{sec3} these results are applied in the study of the capture into parametric autoresonance in the presence of stochastic perturbations.

\section{Problem statement}
\label{sec0}
We consider the system of primary parametric autoresonance equations
\begin{gather}
\label{eq0}
\frac{dr}{d\tau}= r \sin \psi-\gamma r,  \ \ \frac{d\psi}{d\tau}= r-\lambda\tau + \cos \psi, \ \ \tau>0,
\end{gather}
where $0<\gamma<1$ and $\lambda>0$ are parameters.
This system appears after the averaging of equations, describing the behaviour of nonlinear oscillators in the presence of small slowly changing parametric pumping. The real-valued functions $r(\tau)$ and $\psi(\tau)$ represent the amplitude and phase shift of harmonic oscillations. Solutions with an infinitely growing amplitude $r(\tau)\approx\lambda \tau$ and bounded phase shift $\psi(\tau)=\mathcal O(1)$ as $\tau\to\infty$ correspond to the capture into the parametric autoresonance. The existence and asymptotic behaviour of captured solutions of system \eqref{eq0} were discussed in~\cite{KM01,AM05,KG07,LK08}. In this paper we study the effect of white noise on the stability of such solutions.

The asymptotic solution of system \eqref{eq0} with growing amplitude at infinity can be constructed in the following form
\begin{gather}
\label{as}
    r(\tau)={\lambda\tau}+\sum_{k=0}^{\infty} r_k \tau^{-k}, \quad \psi(\tau)=\psi_0+\sum_{k=1}^\infty \psi_k \tau^{-k},
\end{gather}
where $r_k$ and $\psi_k$ are constant coefficients. Substituting these series in system \eqref{eq0} and grouping the expressions of same power of $\tau$ give the following recurrence relations for determining the coefficients $r_k$, $\psi_k$: $\sin\psi_0=\gamma$, $\psi_1=(\cos\psi_0)^{-1}$, $r_0=-\cos\psi_0$, $r_1=-\tan\psi_0$, etc. Note that there are no free parameters in the asymptotic series. The existence of exact particular solutions of system \eqref{eq0} with constructed asymptotics follows from~\cite{AK89}. The stability of these isolated solutions determines the presence of capture into autoresonance. We show that in the case of stability, solutions with asymptotics \eqref{as} attract many other autoresonant solutions with more complicated asymptotic expansions (see Fig.~\ref{Pic1}). Note that system \eqref{eq0} has also non-resonant solutions with the slipping phase and the bounded amplitude. The existence of such solutions excludes the global stability of autoresonant solutions for all initial data. We also note that the structure of the capture region (the set of initial points such that the corresponding solutions possess the unboundedly growing amplitude) for system \eqref{eq0} remains unknown, and we do not discuss this problem here.
\begin{figure}
\centering
\includegraphics[width=0.45\textwidth]{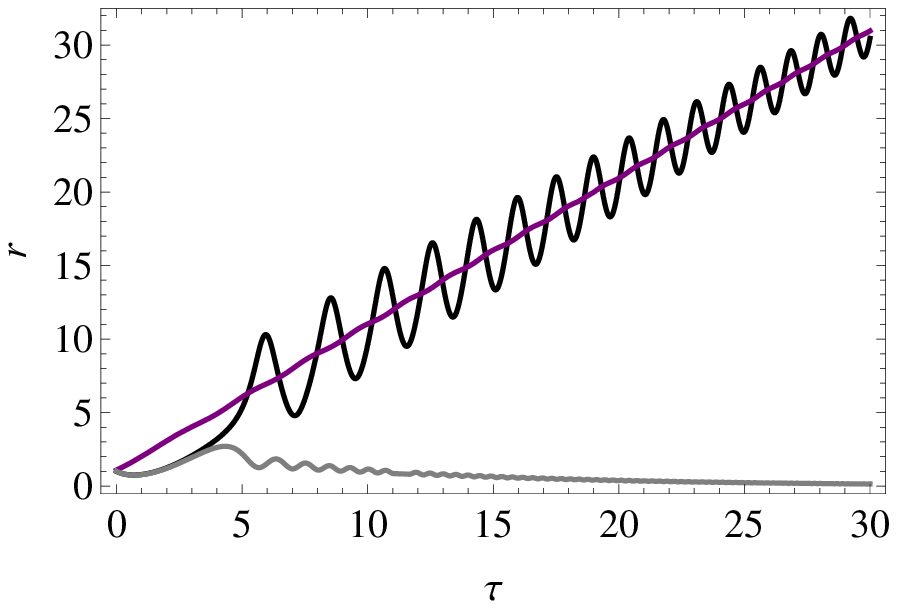} \ \ \includegraphics[width=0.45\textwidth]{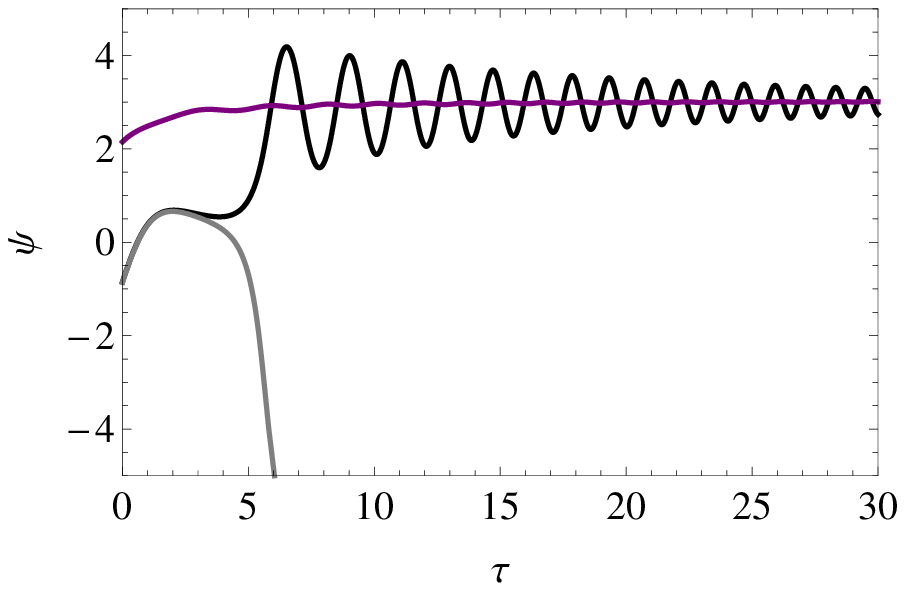}
\caption{The evolution of the amplitude $r(\tau)$ and the phase shift $\psi(\tau)$ for the solutions of \eqref{eq0} with $\lambda=1$, $\gamma=0.1$ and different initial data.}
\label{Pic1}
\end{figure}

Consider the perturbed system in the form
\begin{gather}
\label{eq1}
\frac{dr}{d\tau}=  \big[1 + \mu \xi_1(\tau)\big]r \sin \psi-\gamma r,  \ \
\frac{d\psi}{d\tau}= r-\lambda\tau + \big[1 +  \mu \xi_1(\tau)]\cos \psi+ \mu \xi_2(\tau), \quad \tau>0,
\end{gather}
where the stochastic processes $\xi_1(\tau)$ and $\xi_2(\tau)$ defined on a probability space $(\Omega, \mathcal F, \mathbb P)$  play the role of perturbations. It is assumed that $\mathbb E[\xi_i(\tau)]=0$, $\mathbb E[\xi_i(\tau)\xi_j(0)]=\delta_{ij}\sigma^2_i(\tau)\delta(\tau)$ for all $i,j\in\{1,2\}$ and $\tau\geq 0$, where $\delta_{ij}$ is the Kronecker delta, $\delta(\tau)$ is the Dirac delta function, and the deterministic functions $\sigma_i(\tau)$ together with the small parameter $0<\mu<1$ are used to control the intensity of the perturbations.
Let $\xi_i(\tau)=\sigma_i(\tau) \dot w_i(\tau)$, where $w_1(\tau)$ and $w_2(\tau)$ are independent Wiener processes. Then we can consider the perturbed system \eqref{eq1} in the form of It\^{o} stochastic differential equations. Our goal is to find constraints on the functions $\sigma_1(\tau)$, $\sigma_2(\tau)$, such that the capture into parametric autoresonance is preserved in the perturbed system with probability tending to one. Since the persistent perturbation of white noise type leads to the loss of stability of solutions for all $\tau>0$ (see~\cite{FV,RKh,PRK} and Fig.~\ref{Fig2}), we consider a weaker problem. Specifically, our goal is to find the largest possible time interval on which the stability of autoresonant solutions is preserved.
\begin{figure}
\centering
\includegraphics[width=0.5\textwidth]{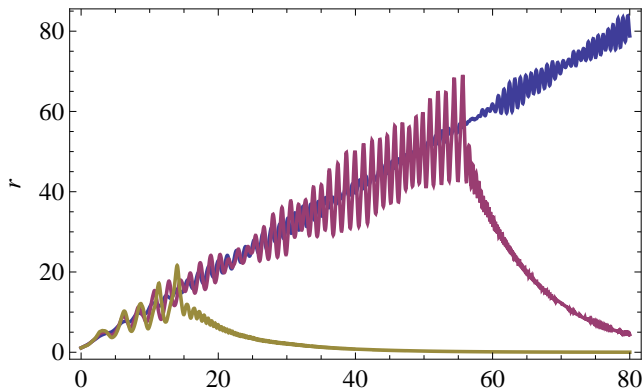}
\caption{Sample paths of the amplitude $r(\tau)$ for solutions of system \eqref{eq1} with $\lambda=1$, $\gamma=0.1$, $r(0)=1.09$, $\psi(0)=2.15$, $\sigma_1(\tau)\equiv 0$, $\sigma_2(\tau)\equiv 1$ and $\mu\in \{0.1, 0.35, 0.55\}$.}
\label{Fig2}
\end{figure}

Note that system \eqref{eq0} is of universal character in the description of parametric autoresonance in nonlinear systems. It describes long-term evolution of different nonlinear oscillations under small parametric driving. As but one example let us consider the following equation
\begin{gather}
\label{eq02}
\frac{d^2u}{dt^2}+\big(1+\varepsilon A(t) \cos 2\Phi(t) \big)\sin u +\vartheta \frac{du}{dt}= 0,
\end{gather}
where $A(t)=1+\mu \eta_1(\varepsilon t)$, $\Phi'(t)=1-\alpha t+ \mu \varepsilon \eta_2(\varepsilon t)$, $0<\varepsilon,\alpha,\vartheta,\mu\ll1$. The functions $\eta_1(s)$, $\eta_2(s)$ play the role of perturbations.
Solutions of equation \eqref{eq02} with $\mu=0$  whose amplitude increase with time from small values $|u(0)|+|u'(0)|\ll 1$ to quantities of order one are associated with the capture into parametric autoresonance. For the asymptotic description of such solutions at the initial stage of the capture we use the method of two scales.  We introduce a slow time $\tau= \varepsilon t/2$ and a fast variable  $\phi(t)=t-\alpha t^2/2$. Then the asymptotic substitution
    $$
        u(t)=  \sqrt{4 \varepsilon r(\tau)}\cos\Big(\frac{\psi(\tau)}{2}+\Phi(t)\Big)+\mathcal O(\varepsilon^{3/2})
    $$
in equation \eqref{eq02} and the averaging procedure over the fast variable $\phi(t)$ lead to system \eqref{eq1} for the slowly varying functions  $r(\tau)$ and $\psi(\tau)$ with  $\lambda=8\alpha\varepsilon^{-2}$, $\gamma=2\vartheta \varepsilon^{-1}$, $\xi_1(\tau)=\eta_1(2\tau)$, $\xi_2(\tau)=4 \eta_2(2\tau)$. In the case $\mu=0$, we get system \eqref{eq0}.

\section{Perturbations of initial data for autoresonant solutions}
\label{sec1}

Note that the unperturbed system \eqref{eq0} has two different asymptotic solutions in the form \eqref{as} distinguished by the choice of a root to the equation $\sin \psi_0=\gamma$. It can easily be checked that the solution with $\psi_0=\arcsin \gamma$ is linearly unstable. However, linear stability analysis fails for the captured solution $r_\ast(\tau)$, $\psi_\ast(\tau)$ with asymptotics \eqref{as}, $\psi_0=\pi-\arcsin \gamma$.
To study the stability of this solution, we need to take into account high-order terms of the equations.  In our analysis we use only the first terms of the asymptotic expansion for the solution,
\begin{gather}
r_\ast(\tau)=\lambda \tau+\nu +\mathcal O(\tau^{-1}), \quad \psi_\ast(\tau)=\pi-\arcsin\gamma-(\nu\tau)^{-1}+\mathcal O(\tau^{-2}), \quad \nu:=\sqrt{1-\gamma^2}.
\label{as2}
\end{gather}
We have
\begin{Th}
Suppose that the coefficients of system \eqref{eq0} satisfy the inequalities $\lambda>0$, $0<\gamma<1$. Then there exists $\tau_0>0$ and  for all $\varepsilon>0$ there exists $\delta_0>0$ such that for all $(\varrho_0, \varphi_0)$: $(\varrho_0-r_\ast(\tau_0))^2+(\varphi_0-\psi_\ast(\tau_0))^2\leq \delta_0^2$ the solution $r(\tau)$, $\psi(\tau)$ to system \eqref{eq0} with initial data $r(\tau_0)=\varrho_0$, $\psi(\tau_0)=\varphi_0$ satisfies the inequalities
\begin{gather}
\label{est}
    \sup_{\tau>\tau_0}\Big\{|r(\tau)-r_\ast(\tau)|\tau^{-1/2}\Big\}\leq\varepsilon,\quad \sup_{\tau>\tau_0}\Big\{|\psi(\tau)-\psi_\ast(\tau)|\Big\}\leq\varepsilon.
\end{gather}
\end{Th}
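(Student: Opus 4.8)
The plan is to pass to the deviation variables $R(\tau)=r(\tau)-r_\ast(\tau)$, $\Psi(\tau)=\psi(\tau)-\psi_\ast(\tau)$ and to control them by a carefully weighted Lyapunov function. Substituting $r=r_\ast+R$, $\psi=\psi_\ast+\Psi$ into \eqref{eq0} and using that $(r_\ast,\psi_\ast)$ is an exact solution (as guaranteed by~\cite{AK89}) gives a nonautonomous system $\dot R=a_{11}R+a_{12}\Psi+f$, $\dot\Psi=a_{21}R+a_{22}\Psi+g$ whose linear part is the Jacobian of \eqref{eq0} along $(r_\ast,\psi_\ast)$. From \eqref{as2} one reads off $a_{11}=\sin\psi_\ast-\gamma=\tau^{-1}+\mathcal O(\tau^{-2})$, $a_{12}=r_\ast\cos\psi_\ast=-\lambda\nu\tau+\mathcal O(1)$, $a_{21}=1$, $a_{22}=-\sin\psi_\ast=-\gamma+\mathcal O(\tau^{-1})$, so the trace equals $-\gamma$ identically and the frozen eigenvalues have real part $-\gamma/2<0$. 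This exhibits the stabilizing mechanism but is not conclusive by itself: the coefficient matrix is nonautonomous with an unboundedly growing entry $a_{12}$, so the freezing argument is invalid and one must produce an honest Lyapunov function, which moreover should reproduce the scaling $V\asymp R^2/\tau+\Psi^2$ of the claimed bound \eqref{est}.

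First I would take the natural candidate $V_0=R^2/(2h)+\Psi^2/2$ with $h=-r_\ast\cos\psi_\ast\sim\lambda\nu\tau$; the weight $1/h$ cancels the large off-diagonal coupling and yields $\dot V_0=h^{-1}\big(a_{11}-\tfrac{h'}{2h}\big)R^2+a_{22}\Psi^2+(\text{nonlinear})$. Here $a_{11}-\tfrac{h'}{2h}=\tfrac1{2\tau}+\mathcal O(\tau^{-2})>0$, so $V_0$ damps $\Psi$ but not $R$. To damp $R$ as well I would add a weighted cross term $-\beta h^{-1}R\Psi$ with a small constant $\beta$; its derivative contributes $-\beta h^{-1}R^2+\beta\Psi^2+(\text{lower order})$, turning the $R^2$ coefficient into $\sim-\beta/(\lambda\nu\tau)<0$ at the price of weakening the phase damping to $-(\gamma-\beta)\Psi^2$, whence the restriction $\beta\in(0,\gamma)$. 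Positive definiteness of the resulting quadratic form reads $h>\beta^2$, valid for large $\tau$. On the linear level this gives $\dot V\le -c_1\tau^{-1}R^2-c_2\Psi^2\le -cV$ with constants $c_1,c_2,c>0$ for $\tau\ge\tau_0$.

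The main difficulty is the nonlinearity. Because $f$ carries the growing amplitude $r_\ast\sim\lambda\tau$, the term $r_\ast[\sin\psi_\ast(\cos\Psi-1)+\cos\psi_\ast\sin\Psi]$ in $\dot R$ is of size $\tau\,\Psi^2$; after multiplication by $R/h$ it produces a contribution of order $R\Psi^2\sim\sqrt\tau\,\Psi^2$ in $\dot V$, which is $\emph{not}$ dominated by the $\mathcal O(1)$ damping precisely because $|R|$ is permitted to grow like $\sqrt\tau$ — the very growth we are trying to establish. The crucial observation is that, using $r_\ast/h=-1/\cos\psi_\ast$, this whole dangerous part equals $-R\big[\tan\psi_\ast(\cos\Psi-1)+\sin\Psi\big]$, whose nonlinear remainder after subtracting the linear term $-R\Psi$ is $-R\,g(\Psi)$ with $g(\Psi)=\tan\psi_\ast(\cos\Psi-1)+\sin\Psi-\Psi=\mathcal O(\Psi^2)$ — that is, $R$ times a function of $\Psi$ alone. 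Hence I would absorb it by the cubic correction $G(\tau,\Psi)=\int_0^\Psi g(s)\,ds=\mathcal O(\Psi^3)$, whose derivative $G'(\Psi)\dot\Psi=R\,g(\Psi)+\mathcal O(\Psi^3)$ cancels the offending term (the explicit $\tau$-dependence of $\tan\psi_\ast$ only contributes $\mathcal O(\tau^{-2}\Psi^3)$, since $\dot\psi_\ast=\mathcal O(\tau^{-2})$). After this normal-form-type cancellation, and checking that the cross term generates no new growing nonlinearity, all remaining nonlinear terms are genuinely of size $\mathcal O(|\Psi|^3)$ or $\mathcal O(\tau^{-1}V)$, hence dominated by $-cV$ on a set $\{V\le\rho^2\}$ once $\rho$ is small and $\tau_0$ is large.

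Finally I would close the argument with the full function $V=R^2/(2h)-\beta h^{-1}R\Psi+\Psi^2/2+G(\tau,\Psi)$: fixing $\tau_0$ large enough (so that $V$ is positive definite and $\dot V\le-\tfrac{c}{2}V$ holds on $\{V\le\rho^2\}$, with $\tau_0$ independent of $\varepsilon$) makes this sublevel set positively invariant, and the a priori bound it provides also rules out finite-time blow-up. Since $V\ge c_0(R^2/\tau+\Psi^2)$ there, the inequality $V(\tau)\le V(\tau_0)$ for $\tau>\tau_0$ yields $|R|\tau^{-1/2}\le\varepsilon$ and $|\Psi|\le\varepsilon$ after adjusting constants (taking $\rho=\rho(\varepsilon)$, capped by a fixed small value when $\varepsilon$ is large). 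As $\tau_0$ is fixed, the estimate $V(\tau_0)\le C\delta_0^2$ lets me choose $\delta_0(\varepsilon)$ so small that every $(\varrho_0,\varphi_0)$ in the $\delta_0$-disc starts inside $\{V\le\rho^2\}$. I expect the nonlinear cancellation via $G$ to be the central technical point; the linear weighting and the final invariance step are comparatively routine.
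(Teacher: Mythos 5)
Your proof is correct and follows essentially the same route as the paper: the paper also passes to the deviation variables and uses the Lyapunov function $V=(\nu\tau)^{-1}\big[H+\tfrac{\gamma}{2}R\Psi\big]$, where $H$ is the Hamiltonian of the near-Hamiltonian system \eqref{eq4}, obtaining $V\asymp(\nu\tau)^{-1}R^2+\Psi^2$ and $\frac{dV}{d\tau}\le-\frac{\gamma}{6}V$ and then integrating. Your hand-built function $\frac{R^2}{2h}-\beta h^{-1}R\Psi+\frac{\Psi^2}{2}+G(\tau,\Psi)$ is in substance the same object -- the weight $h^{-1}\sim(\lambda\nu\tau)^{-1}$ matches the paper's $(\nu\tau)^{-1}$, the small cross term plays the role of $\frac{\gamma}{2}R\Psi$, and your cubic correction $G$ is exactly the non-quadratic part of $h^{-1}H$, so your ``normal-form cancellation'' is what the Hamiltonian structure performs automatically in the paper's computation.
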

\begin{proof}
In system \eqref{eq0} we make the change of variables $r=r_\ast(\tau)+R(\tau)$, $\psi=\psi_\ast(\tau)+\Psi(\tau)$, and for new functions $R(\tau)$, $\Psi(\tau)$ we study the stability of the trivial solution $R(\tau)\equiv 0$, $\Psi(\tau)\equiv 0$ to the following system close to Hamiltonian system
\begin{gather}
\label{eq4}
    \frac{dR}{d\tau}=-\partial_\Psi H(R,\Psi,\tau)-\gamma R,\quad \frac{d\Psi}{d\tau}=\partial_R H(R,\Psi,\tau),
\end{gather}
where
\begin{eqnarray*}
H(R,\Psi,\tau)& = &\frac{R^2}{2}+\big(R+r_\ast(\tau)\big) \Big[\cos\big(\Psi+\psi_\ast(\tau)\big)-\cos\psi_\ast(\tau)\Big]+\Psi r_\ast(\tau) \sin\psi_\ast(\tau).
\end{eqnarray*}
By taking into account the asymptotics of the captured solution $r_\ast(\tau)$, $\psi_\ast(\tau)$ one can readily write out the asymptotics of the function $H(R,\Psi,\tau)$ as $\tau\to\infty$ and  $d=\sqrt{R^2+\Psi^2}\to 0$:
$$ H(R,\Psi,\tau)= \frac{\nu\tau\Psi^2}{2}\Big[1+\mathcal O(\Psi)\Big]+\frac{R^2}{2}+\mathcal O(d^3)+\mathcal O(d^2)\mathcal O(\tau^{-1}).$$
The asymptotic estimates are uniform with respect to $(R,\Psi,\tau)$ in the domain  $  D(d_1,\tau_1)=\{(R,\Psi,\tau)\in\mathbb R^3: d\leq d_1, \tau\geq \tau_1\}$ with positive constants $d_1>0$ and $\tau_1>1$.
It is clear that the function $H(R,\Psi,\tau)$ is positive definite function in the neighbourhood of the equilibrium $(0,0)$.
A Lyapunov function candidate for system \eqref{eq4} is constructed of the form
$$V(R,\Psi,\tau)=(\nu\tau)^{-1} \Big[ H(R,\Psi,\tau)+ \frac{\gamma R\Psi}{2} \Big].$$
From the properties of the function $H(R,\Psi,\tau)$ it follows that there exist  $0<d_0\leq d_1$ and $\tau_0\geq \tau_1$ such that
\begin{gather}
\begin{split}
\label{eq5}
\frac{1}{4}\Big[(\nu\tau)^{-1}R^2+\Psi^2\Big]\leq V(R,\Psi,\tau)\leq \frac{3}{4}\Big[(\nu\tau)^{-1}R^2+\Psi^2\Big],\\
\frac{dV}{d\tau}\Big|_{\eqref{eq4}}  = \partial_\tau V+\partial_R V [-\partial_\Psi H -\gamma R] +\partial_\psi V \partial_R H \leq \\
\leq -\frac{\gamma}{4} \Big[(\nu \tau)^{-1}R^2+\Psi^2\Big][1+\mathcal O(d)+\mathcal O(\tau^{-1})]\leq -\frac{\gamma}{6} V\leq 0
\end{split}
\end{gather}
for all $(R,\Psi,\tau)\in  D(d_0,\tau_0)$.  Integrating the last expression with respect to $\tau$, we obtain the following estimates
\begin{gather*}
  \frac{1}{4}\Big[(\nu\tau)^{-1} R^2(\tau)+\Psi^2(\tau)\Big]\leq V\big(R(\tau),\Psi(\tau),\tau)\leq V\big(R(\tau_0),\Psi(\tau_0),\tau_0)\leq \frac{3}{4\nu}\Big[R^2(\tau_0)+ \Psi^2(\tau_0)\Big]
\end{gather*}
as $\tau\geq \tau_0$, where $R(\tau)$, $\Psi(\tau)$ is the solution to system \eqref{eq4} with initial data $R^2(\tau_0)+\Psi^2(\tau_0)\leq \delta^2_0$. Therefore, for all $\varepsilon>0$ $(\varepsilon<d_0)$ there exists $\delta_0=\varepsilon\sqrt{\nu/3}>0$ such that $|R(\tau)|\tau^{-1/2}\leq \varepsilon$ and $|\Psi(\tau)|\leq \varepsilon$ for all $\tau\geq \tau_0$. By means of change of variables we derive the estimates \eqref{est}.
\end{proof}

\begin{Cor}
Suppose that the coefficients of system \eqref{eq0} satisfy the inequalities $\lambda>0$, $0<\gamma<1$. Then the solution $r_\ast(\tau)$, $\psi_\ast(\tau)$ with asymptotics \eqref{as2} is the attractor for two-parametric family of captured solutions.
\end{Cor}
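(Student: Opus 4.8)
The plan is to upgrade the Lyapunov stability established in the Theorem to genuine attraction by exploiting the \emph{strict} dissipation already present in the Lyapunov inequality \eqref{eq5}. I keep the notation of the proof above: after the shift $R=r-r_\ast(\tau)$, $\Psi=\psi-\psi_\ast(\tau)$, the perturbation $(R,\Psi)$ solves \eqref{eq4}, and the function $V(R,\Psi,\tau)=(\nu\tau)^{-1}[H+\gamma R\Psi/2]$ obeys both the two-sided bound and the decay estimate in \eqref{eq5} throughout $D(d_0,\tau_0)$. Choosing $\delta_0$ so small that every trajectory issued from the disk $(\varrho_0-r_\ast(\tau_0))^2+(\varphi_0-\psi_\ast(\tau_0))^2\le\delta_0^2$ stays in $D(d_0,\tau_0)$ for all $\tau\ge\tau_0$ — which is exactly what the Theorem guarantees — I may apply \eqref{eq5} along the whole trajectory.

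First I would integrate the differential inequality $dV/d\tau\le-(\gamma/6)V$ from $\tau_0$ to $\tau$, obtaining
\[
  V\big(R(\tau),\Psi(\tau),\tau\big)\le V\big(R(\tau_0),\Psi(\tau_0),\tau_0\big)\,e^{-\frac{\gamma}{6}(\tau-\tau_0)}.
\]
Combining this with the lower bound $\tfrac14\big[(\nu\tau)^{-1}R^2+\Psi^2\big]\le V$ gives
\[
  (\nu\tau)^{-1}R^2(\tau)+\Psi^2(\tau)\le 4\,V\big(R(\tau_0),\Psi(\tau_0),\tau_0\big)\,e^{-\frac{\gamma}{6}(\tau-\tau_0)}.
\]
Since the exponential factor overwhelms the linear weight $\nu\tau$, this yields both $\Psi(\tau)\to0$ and $R(\tau)\to0$ as $\tau\to\infty$; in fact the decay of $R$ and $\Psi$ is exponential up to the polynomial factor $\sqrt{\tau}$.

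From $R(\tau)\to0$ and $\Psi(\tau)\to0$ together with the asymptotics \eqref{as2} I then conclude $r(\tau)=\lambda\tau+\nu+o(1)\to\infty$ and $\psi(\tau)=\pi-\arcsin\gamma+o(1)$, so each such trajectory is itself a captured solution whose leading-order behaviour coincides with that of $r_\ast,\psi_\ast$. As the initial points $(\varrho_0,\varphi_0)$ range over the two-dimensional disk of radius $\delta_0$, uniqueness of solutions of \eqref{eq0} produces a genuine two-parameter family of captured trajectories, every one of which is attracted to $r_\ast(\tau),\psi_\ast(\tau)$; this is precisely the assertion of the Corollary.

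The main point to watch is the competition between the exponential decay of $V$ and the $\tau$-dependence of the weight in the lower bound: one must verify that $e^{-\gamma(\tau-\tau_0)/6}$ indeed beats $\nu\tau$, so that $R(\tau)$ — and not merely $R(\tau)\tau^{-1/2}$ — tends to zero. This is exactly what promotes the $o(\tau^{1/2})$ control of the Theorem to actual convergence. A secondary and more routine point is to confirm that the limiting trajectories satisfy the definition of capture, namely unbounded amplitude $r\sim\lambda\tau$ with bounded phase; this follows at once by inserting $R,\Psi\to0$ into \eqref{as2}.
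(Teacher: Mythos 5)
Your proposal is correct and takes essentially the same route as the paper: the paper's proof likewise integrates the strict decay inequality $dV/d\tau\le-(\gamma/6)V$ from \eqref{eq5} to obtain $V(R(\tau),\Psi(\tau),\tau)\le V(R(\tau_0),\Psi(\tau_0),\tau_0)\,e^{-\gamma(\tau-\tau_0)/6}$, then undoes the change of variables to conclude $r(\tau)=r_\ast(\tau)+\mathcal O\big(\tau^{1/2}e^{-\gamma\tau/12}\big)$ and $\psi(\tau)=\psi_\ast(\tau)+\mathcal O\big(e^{-\gamma\tau/12}\big)$, with the exponential dominating the weight $\nu\tau$ exactly as you observe. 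Your Gr\"{o}nwall-plus-sandwich argument, including the remark that the trajectory must remain in $D(d_0,\tau_0)$ so that \eqref{eq5} applies, matches the paper's proof in substance and level of detail.
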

\begin{proof}
From the inequality for the total derivative of the Lyapunov function $V(R,\Psi,\tau)$ along the trajectories of system \eqref{eq4} it follows that $$0\leq V(R(\tau),\Psi(\tau),\tau)\leq V(R(\tau_0),\Psi(\tau_0),\tau_0)\exp\big(-\gamma(\tau-\tau_0)/6\big)\leq \frac{3d_0^2}{4} \exp\big(-\gamma(\tau-\tau_0)/6\big)$$
as $\tau\geq \tau_0$, where $R^2(\tau_0)+\Psi^2(\tau_0)\leq d^2_0$. The change-of-variables formula implies the following asymptotic estimates for solutions to system \eqref{eq0} with initial data from the $\delta_0$-neighbourhood of the isolated autoresonant solution $r(\tau)=r_\ast(\tau)+\mathcal O(\tau^{1/2}\exp(-\gamma \tau/12))$, $\psi(\tau)=\psi_\ast(\tau)+\mathcal O(\exp(-\gamma \tau/12))$ as $\tau\geq \tau_0$.
\end{proof}

\section{Stochastic perturbations of locally stable systems}
\label{sec2}
In the study of stochastic perturbations of system \eqref{eq4} and other similar equations with locally stable solutions it is convenient to consider the system of differential equations
    \begin{equation}
    \frac{d{\bf z}}{d t}={\bf f}({\bf z}, t), \quad {\bf z}=({\bf x},{\bf y})=(x_1,\dots,x_l,y_1,\dots,y_m)\in\mathbb R^n, \quad t\geq t_0>1,
    \label{2eq1}
    \end{equation}
where $l+m=n$, $1\leq m\leq n-1$ and ${\bf f}(0, t)\equiv 0$. Suppose that the vector-valued function ${\bf f}({\bf z},t)=(f_1({\bf z},t),\dots, f_n({\bf z},t))$ is continuous and for all $T>0$ satisfies a Lipschitz condition: $|{\bf f}({\bf z}_1, t)-{\bf f}({\bf z}_2, t)|\leq M_1 |{\bf z}_1-{\bf z}_2|$ for all ${\bf z}_1, {\bf z}_2\in\mathbb R^n$, $t_0\leq  t\leq t_0+T$ with positive constant $M_1$. Assume that there exists a local Lyapunov function  $U({\bf z}, t)$ for system \eqref{2eq1} satisfying the inequalities:
    \begin{equation}
    \begin{array}{c}
        \displaystyle |{\bf x}|^2 + a t^{-b} |{\bf y}|^2\leq U({\bf z}, t)\leq A \Big[|{\bf x}|^2+a t^{-b} |{\bf y}|^2\Big], \quad |\partial_{ {\bf z}} U|^2\leq B U, \quad  |\partial_{ z_i}\partial_{z_j} U|\leq C,\\
    \displaystyle
        \frac{dU}{d t}\Big|_{\eqref{2eq1}}\stackrel{def}{=}\frac{\partial U}{\partial  t}+\sum\limits_{k=1}^{n}\frac{\partial U}{\partial z_k} f_k\leq - q U
    \end{array}
    \label{2eq2}
    \end{equation}
in the domain $\{ ({\bf z},t)\in\mathbb R^{n+1}:  |{\bf z}|\leq \rho_0,  t\geq t_0\}$ with parameters $A, B, C, q, \rho_0, b>0$, $a\geq 0$.
The existence of such Lyapunov function guaranties that the trivial solution ${\bf z}(t)\equiv 0$ is locally stable with respect to variables  ${\bf x}=(x_1,\dots,x_l)$ (if $m=0$, the trivial solution is stable with respect to all variables). Note that if $a>0$, then the solution ${\bf z}(t)\equiv 0$ is stable with respect to variables ${\bf y}=(y_1,\dots,y_m)$ in some weighted norm. Let us remark that the Lyapunov function, constructed in the previous section for the system of primary parametric autoresonance, possesses the similar estimates (cp. \eqref{eq5} with \eqref{2eq2}).
Note also that such Lyapunov functions are constructed in the stability analysis of nonlinear non-autonomous systems of differential equations (see, for example,~\cite{Vor,HK02,LK14}).

Together with system \eqref{2eq1} we consider the perturbed system in the form of It\^{o} stochastic differential equations
    \begin{gather}
    d{\bf z}(t)={\bf f}({\bf z}(t), t)\, d t+\mu\, G({\bf z}(t), t)\,d{\bf w}({ t}), \quad  {\bf z}(t_0)={\bf z}_0\in\mathbb R^n,
    \label{2eq0}
    \end{gather}
where ${\bf w}( t)=(w_1( t),\dots,w_n( t))$ is $n$-dimensional Wiener process defined on a probability space $(\Omega,\mathcal F,\mathbb P)$,
$G({\bf z}, t)=\{g_{ij}({\bf z}, t)\}_{n\times n}$ is a continuous matrix which is independent of $\omega\in\Omega$ and for all $T>0$
satisfies the following conditions $\|G({\bf z}, t)\|\leq M_2(1+|{\bf z}|)$, $\|G({\bf z}_1, t)-G({\bf z}_2, t)\|\leq M_3 |{\bf z}_1 -{\bf z}_2|$ for all
${\bf z}, {\bf z}_1, {\bf z}_2\in\mathbb R^n$, $t_0\leq t\leq t_0+T$ with positive constants $M_2, M_3>0$. We assume that ${\bf z}_0$ does not depend on $\omega\in\Omega$.
These constraints on the coefficients of system \eqref{2eq0} guarantee the existence and uniqueness of solution ${\bf z}(t)$ for all $t\geq t_0$ and for all
${\bf z}_0\in\mathbb R^n$
(see, for instance,~\cite[\S 5.2]{BO98}, \cite[\S 3.3]{RH12}).
We assume that the perturbed system does not preserve the trivial solution, $G(0,t)\not\equiv 0$. Define the class of perturbations $\mathcal A_h$ as a set of matrices $G({\bf z}, t)$ such that $|\sigma_{ij}({\bf z},t)|\leq h$ for all $|{\bf z}|\leq \rho_0$ and $t\geq t_0$, where $\sigma=G\cdot G^\ast/2=\{\sigma_{ij}\}_{n\times n}$.

We study the stability of the solution ${\bf z}(t)\equiv 0$ of system \eqref{2eq1} with respect to stochastic perturbations on a finite time interval. One variant of this approach is to find the largest possible time interval $[t_0; t_0+T_\mu]$ on which solutions to the perturbed system \eqref{2eq0} are close to the equilibrium of the deterministic system \eqref{2eq1}
(see, for instance,~\cite[Chap. 9]{FV} and~\cite[Chap. 7]{MH88}). We have
\begin{Th}\label{Th2}
Suppose that for system \eqref{2eq1} there exists a Lyapunov function $U({\bf z},t)$, possessing
estimates \eqref{2eq2}. Then, for all $N\in\mathbb N$, $h>0$, $0<\varkappa<1$, $\varepsilon_1, \varepsilon_2 > 0$ there exist $\delta, \Delta>0$ such that $\forall\, \mu<\Delta$, $G\in\mathcal A_h$, ${\bf z}_0 = ({\bf x}_0, {\bf y}_0):$ $|{\bf z}_0|<\delta$ the solution ${\bf z}(t)$ of the unperturbed system \eqref{2eq0} with initial data ${\bf z}(t_0)={\bf z}_0$ satisfies the inequalities
\begin{gather}
\label{Pest}
        \mathbb P\Big(\sup_{t_0 \leq t\leq t_0+ T_\mu}|{\bf x}( t)|\geq \varepsilon_1\Big)\leq \varepsilon_2, \quad \mathbb P\Big(\sup_{t_0 \leq t\leq t_0+ T_\mu} a t^{-b/2}|{\bf y}( t)|\geq \varepsilon_1\Big)\leq \varepsilon_2
\end{gather}
with $T_\mu=\mu^{-2N(1-\varkappa)}$.
\end{Th}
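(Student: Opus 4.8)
The plan is to reduce the two vector bounds in \eqref{Pest} to the control of the single scalar $U(\mathbf z(t),t)$. Indeed, the lower bound in \eqref{2eq2} gives $|\mathbf x(t)|^2\le U(\mathbf z(t),t)$ and $(a t^{-b/2}|\mathbf y(t)|)^2\le a\,U(\mathbf z(t),t)$, so both suprema in \eqref{Pest} are dominated as soon as $U$ is kept below a fixed level $\ell=\ell(\varepsilon_1,a)$. Since \eqref{2eq2} is available only in $\{|\mathbf z|\le\rho_0\}$, I would work with the stopping time $\tau_\ast=\inf\{t\ge t_0:|\mathbf z(t)|\ge\rho_0\}$ and the stopped process $\mathbf z(t\wedge\tau_\ast)$. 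Applying It\^o's formula to $U^N$ and using all four inequalities in \eqref{2eq2} together with $\|GG^\ast\|\le 2nh$ for $G\in\mathcal A_h$ — the Hessian bound $|\partial_{z_i}\partial_{z_j}U|\le C$ and the gradient bound $|\partial_{\mathbf z}U|^2\le BU$ absorbing the It\^o correction, while the last inequality in \eqref{2eq2} supplies the leading dissipative term — I expect the generator $\mathcal L$ of \eqref{2eq0} to satisfy, inside the domain,
\[
\mathcal L\big(U^N\big)\le -Nq\,U^N+C_N\,\mu^2\,U^{N-1},
\]
with $C_N$ depending on $N,B,C,h,n$ but not on $\mu$.

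Taking expectations removes the martingale part, so $v_k(t):=\mathbb E\big[U^k(\mathbf z(t\wedge\tau_\ast),t\wedge\tau_\ast)\big]$ obeys $v_k'\le -kq\,v_k+C_k\mu^2 v_{k-1}$. Inducting on $k$ from $v_1'\le -q v_1+C_1\mu^2$ up to $k=2N$, and fixing the initial radius $\delta$ so small that $U(\mathbf z_0,t_0)\le\delta^2$, I obtain the uniform estimates $v_k(t)\le C_k\big(\delta^{2k}e^{-kq(t-t_0)}+\mu^{2k}\big)$; here $\delta$ is a fixed number, independent of $\mu$. To pass from these one-time marginals to the supremum I would write $U^N(t\wedge\tau_\ast)=U^N(t_0)+\int_{t_0}^{t\wedge\tau_\ast}\mathcal L(U^N)\,ds+M_t$, bound the drift integrand by its positive part $C_N\mu^2U^{N-1}$, and combine Doob's maximal inequality with the Burkholder--Davis--Gundy bound for $\langle M\rangle_t\le C\mu^2\int_{t_0}^t U^{2N-1}\,ds$. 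Feeding in the moment estimates and integrating — the decaying $\delta$-terms contribute only $O(\delta^{2N})$, the stationary $\mu$-terms contribute $O(\mu^{2N}T_\mu)$ — yields
\[
\mathbb E\Big[\sup_{t_0\le t\le t_0+T_\mu}U^N(\mathbf z(t\wedge\tau_\ast),t\wedge\tau_\ast)\Big]\le C\big(\delta^{2N}+\mu^{2N}T_\mu\big).
\]

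Here the arbitrary moment order $N$ and the strict loss $0<\varkappa<1$ do the work: with $T_\mu=\mu^{-2N(1-\varkappa)}$ the fluctuation term becomes $\mu^{2N}T_\mu=\mu^{2N\varkappa}\to0$ as $\mu\to0$. Markov's inequality then gives
\[
\mathbb P\Big(\sup_{t_0\le t\le t_0+T_\mu}U(\mathbf z(t\wedge\tau_\ast),t\wedge\tau_\ast)\ge\ell\Big)\le\ell^{-N}C\big(\delta^{2N}+\mu^{2N\varkappa}\big).
\]
I would first fix $\delta$ small enough that $C(\delta^2/\ell)^N\le\varepsilon_2/2$, and then choose $\Delta$ so that $C\ell^{-N}\mu^{2N\varkappa}\le\varepsilon_2/2$ for all $\mu<\Delta$; both $\delta$ and $\Delta$ then depend only on $N,h,\varkappa,\varepsilon_1,\varepsilon_2$ and the parameters of \eqref{2eq2}. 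On the complement of this event the comparisons $|\mathbf x|^2\le U$ and $(a t^{-b/2}|\mathbf y|)^2\le aU$ turn the bound on $U$ into the two inequalities \eqref{Pest}, provided the stopped and unstopped processes agree on $[t_0,t_0+T_\mu]$.

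The step I expect to require the most care is exactly this last proviso — confining the trajectory to $|\mathbf z|\le\rho_0$ over the whole horizon $T_\mu$. The function $U$ controls $\mathbf y$ only in the degenerate weighted norm $a t^{-b}|\mathbf y|^2$, so $U\le\ell$ permits $|\mathbf y|$ as large as $\sim t^{b/2}\sqrt{\ell/a}$, and an escape of $\tau_\ast$ through these weakly damped directions is invisible to the level $\ell$; the naive estimate $\mathbb E[|\mathbf y(t\wedge\tau_\ast)|^{2N}]\le a^{-N}t^{bN}v_N(t)$ keeps such escapes rare only while $t^{bN}\mu^{2N}$ stays small, which is the genuine restriction on the horizon. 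In the autoresonance application of Section~\ref{sec1} this difficulty dissolves, because $H$ is an exact quadratic in the amplitude variable and the estimates \eqref{eq5} persist for large $R$: there $\tau_\ast$ may be taken as the first exit in the phase variable alone, whose crossing forces $U$ up to order $\rho_0^2$ and is therefore already excluded by the maximal estimate, leaving $T_\mu$ free to grow through all orders $N$.
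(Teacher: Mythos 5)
Your proposal is correct in its quantitative core and reaches exactly the paper's threshold $\mu^{2N}T_\mu=\mu^{2N\varkappa}$, but by a genuinely different route. The paper never takes expectations of $U^k$, never derives the moment recursion $v_k'\le -kq\,v_k+C_k\mu^2v_{k-1}$, and never invokes Burkholder--Davis--Gundy. Instead it builds a finite hierarchy of corrected Lyapunov functions: $U_1=U+\mu^2n^2hC\,(T+t_0-t)$ and $U_k=U^k+\mu^2a_{k-1}U_{k-1}$ with $a_k=(k+1)n^2h(B+C)q^{-1}$, where the It\^o correction at level $k$ (bounded via $|\partial_{\bf z}U|^2\le BU$ and $|\partial_{z_i}\partial_{z_j}U|\le C$, the same two inequalities you use) is absorbed by the dissipation released at level $k-1$, so that $\mathcal L U_N\le -Nq\,U^N\le 0$ throughout the domain. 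Then $U_N({\bf z}(s_t),s_t;T)$ is a nonnegative supermartingale and a \emph{single} application of Doob's maximal inequality gives $\mathbb P(\sup U\ge\varepsilon_1^2)\le U_N({\bf z}_0,t_0;T)\,\varepsilon_1^{-2N}\le M_N\big[|{\bf x}_0|^{2N}+t_0^{-bN}|{\bf y}_0|^{2N}+\mu^{2N\varkappa}\big]\varepsilon_1^{-2N}$, after which $\delta$ and $\Delta$ are chosen in the same order as in your argument. What the paper's construction buys is freedom from two technical debts in your scheme: (i) your moment ODE for the stopped process is not literally $v_k'\le-kq\,v_k+C_k\mu^2v_{k-1}$, because stopping freezes $U$ at the exit and leaves a nondecaying boundary contribution in $v_k$ that must be handled separately (e.g.\ by stopping at a level set of $U$ rather than only at $|{\bf z}|=\rho_0$); (ii) the $\mathbb E\sup$/BDG step does work, but only because the exponential decay of $v_{2N-1}$ kills the $T$-factor in the $\delta$-contribution -- a point worth making explicit, since the crude bound $\mu(T\sup_t v_{2N-1})^{1/2}$ diverges for $N(1-\varkappa)>1$.

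On your final proviso: your diagnosis that $U$ controls ${\bf y}$ only through the weight $at^{-b}$ is accurate, but note that the paper does not resolve this either. It stops at the Euclidean sphere $|{\bf z}|=\varepsilon_1$ and passes from the unstopped to the stopped event by a bare equality between $\mathbb P\big(\sup_{t_0\le t\le t_0+T}U_N({\bf z}(t),t;T)\ge\varepsilon_1^{2N}\big)$ and its stopped counterpart; this identification is justified for exits through the ${\bf x}$-part of the boundary (where $U\ge\varepsilon_1^2$) but silently ignores exits through $|{\bf y}|=\varepsilon_1$ at large $t$, where $U$ can still be small. Consistently, the theorem's second estimate in \eqref{Pest} claims control of ${\bf y}$ only in the weighted norm $at^{-b/2}|{\bf y}|$, so such exits are not excluded by the conclusion, and the first estimate is, strictly speaking, established for the process confined to the domain by stopping. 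So your closing paragraph does not expose a defect of your proof relative to the paper's; it makes explicit a limitation the paper leaves implicit, and your observation that in the autoresonance application the estimates \eqref{eq5} persist for large $R$ (so the stopping may be taken in the phase variable alone) is a genuine strengthening for that case rather than a repair your general argument uniquely requires.
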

\begin{proof}
Let us fix the parameters $h>0$, $0<\varkappa<1$, $\varepsilon_2>0$ and $0<\varepsilon_1<r_0$.
Let  ${\bf z}( t)$ be a solution of system \eqref{2eq0} with $G\in\mathcal A_h$ and initial data ${\bf z}(t_0)={\bf z}_0=({\bf x}_0, {\bf y}_0)$, $|{\bf z}_0|<\delta$, and let $t_{\mathcal D}$ be the first exit time of the solution ${\bf z}( t)$ from the domain
$$\mathcal D \stackrel{def}{=}\{({\bf z}, t)\in\mathbb R^{n+1}: |{\bf z}|< \varepsilon_1, \ \  t_0 <  t < t_0+  T\}.$$
We define the function $s_t=\min\{ t_{\mathcal D}, t\}$, then ${\bf z}(s_t)$ is the process stopped at first exit time from the domain $\mathcal D$.  Positive parameters $\delta$, $T$ will be specified later.
Let us first consider the case $N=1$. The Lyapunov function for the stochastic system \eqref{2eq0} is constructed in following the form
$$
        U_1({\bf z}, t; T)=U({\bf z}, t)+ \mu^2  h n^2 C \cdot (T+t_0-t).
$$
In the study of stability of solutions to stochastic differential equations the following operator plays the role of the total derivative along the trajectories~\cite[\S 3.6]{RH12}:
$\mathcal L:=\partial_ t+\sum_{i=1}^n f_i({\bf z}, t) \partial_{z_i} + \mu^2 \sum_{i,j=1}^n \sigma_{ij}({\bf z}, t)\partial_{z_i}\partial_{z_j}.$
It easy to see that $U_1({\bf z},t;T)\geq U({\bf z},t)\geq 0$ and
   \begin{gather*}
    \mathcal L U_1  =   \frac{dU}{d t}\Big|_{\eqref{2eq1}}  +  \mu^2\sum_{i,j=1}^n \sigma_{ij} \, \partial_{z_i}\partial_{z_j} U - \mu^2 h n^2 C \leq -q  U \leq 0
    \end{gather*}
for all $({\bf z}, t)\in \mathcal D$. These estimates guarantee that $U_1({\bf z}(s_ t),s_ t)$ is a nonnegative supermartingale~\cite[\S 5.2]{RH12}. Using the properties of the function $U({\bf z},t)$ and Doob’s inequality for supermartingales, we get the following estimates
\begin{equation}\label{2eq6}
    \begin{array}{lll}
    \displaystyle \mathbb P\Big(\sup_{ t_0\leq  t\leq t_0+ T} |{\bf x}( t)|\geq \varepsilon_1\Big)
    & = & \displaystyle \mathbb P\Big(\sup_{ t_0\leq  t\leq t_0+T} |{\bf x}( t)|^2\geq \varepsilon_1^2\Big) \leq  \\
     &\leq  & \displaystyle \mathbb P\Big(\sup_{t_0\leq  t\leq t_0+T} U({\bf z}(t), t)\geq \varepsilon_1^2\Big) \leq \\
    & \leq & \displaystyle  \mathbb P\Big(\sup_{ t_0\leq  t\leq  t_0+T} U_1({\bf z}(t), t; T)\geq \varepsilon_1^2\Big) = \\
        & = & \displaystyle \mathbb P\Big(\sup_{ t\geq t_0} U_1({\bf z}(s_ t),s_t;T)\geq \varepsilon_1^2\Big) \leq \\
    & \leq & \displaystyle \frac{  U_1({\bf z}_0,t_0;T)}{\varepsilon_1^2}\leq \frac{A\big[|{\bf x}_0|^2+a t_0^{-b}|{\bf y}_0|^2\big]+\mu^2 n^2 h C T}{\varepsilon_1^2}.
    \end{array}
\end{equation}
Define $T=\mu^{-2(1-\varkappa)}$ and the parameters $\delta=({\varepsilon_1^2 \varepsilon_2}/{2 A}(1+a))^{1/2}$ and $\Delta=({ \varepsilon_1^2 \varepsilon_2 }/{2\, n^2 h \, C})^{1/2\varkappa}$. Then $A\big[|{\bf x}_0|^2+a t_0^{-b}|{\bf y}_0|^2\big] +\mu^{2\varkappa} n^2  h \, C \leq \varepsilon_1^2 \varepsilon_2$ for all $|{\bf z}_0|<\delta$, $\mu<\Delta$.
If $a=0$, this estimate holds for all ${\bf y}_0\in\mathbb R^n$. Taking into
account \eqref{2eq6}, we obtain the estimate
    \begin{gather}\label{2eq7}
    \mathbb P(\sup_{t_0\leq t\leq t_0+T}|{\bf x}( t)|\geq \varepsilon_1)\leq \varepsilon_2.
    \end{gather}

The stability for $0 \leq t\leq \mu^{-2N(1-\varkappa)}$ is proved by using the Lyapunov function $U_N({\bf z},t;T)$ in the following form~\cite{OS17}
    \begin{eqnarray*}
    U_N({\bf z},t;T) & = & \big( U({\bf z},t)\big)^N+\mu^2 a_{N-1} U_{N-1}({\bf z},t;T), \\
    U_k({\bf z},t;T) & = & \big( U({\bf z},t)\big)^k+\mu^2 a_{k-1} U_{k-1}({\bf z},t;T), \quad k=2,\dots,N-1,\\
     U_1({\bf z}, t; T)& = & U({\bf z}, t)+ \mu^2 n^2  h C \cdot (T+t_0-t),
    \end{eqnarray*}
where $a_k=(k+1) n^2 h (B+C)q^{-1}$. It is easy to check that the following inequalities hold
 \begin{eqnarray*}
    \mathcal L U_1 & \leq & -q U, \\
      \mathcal L U_2 & = & 2 U \,\mathcal L U + 2 \mu^2 \sum_{i,j=1}^n \sigma_{ij}\, \partial_{z_i}U\partial_{z_j}U + \mu^2 a_1 \mathcal L  U_1 \leq \\
                    & \leq &  - 2 q U^2 + \mu^2 \big(2 n^2 h (B+C) - a_1 q \big)   U=-2q U^2, \\
     \mathcal L U_3 & \leq & 3 U^2 \,\mathcal L U +6\mu^2 U \sum_{i,j=1}^n \sigma_{ij}\, \partial_{z_i}U\partial_{z_j}U  + \mu^2 a_2 \mathcal L  U_2 \leq \\
     & \leq & -3q  U^3+2 \mu^2 \Big(3 n^2 h\, (B+C) - a_2 q \Big) U^2 = - 3q U^3, \\
      \mathcal L U_{N}  & \leq  & N U^{N-1} \,\mathcal L U +N (N-1)\mu^2 U^{N-2} \sum_{i,j=1}^n \sigma_{ij}\, \partial_{z_i}U\partial_{z_j}U  + \mu^2 a_{N-1} \mathcal L  U_{N-1} \leq  \\
                        & \leq & -N q   U^{N}+(N-1) \mu^2  \Big(N n^2 h\, (B+C) - a_{N-1} q \Big)   U^{N-1}=- N q   U^{N}\leq 0
   \end{eqnarray*}
for all $({\bf z},t)\in\mathcal D$ and any natural number $N\geq 1$. Since $U_N({\bf z},t;T)\geq \big(U({\bf z},t)\big)^N\geq 0$ for all $({\bf z}, t)\in \mathcal D$, we see that the function $U_N({\bf z}(s_t),s_t;T)$ is a nonnegative supermartingale and the following estimates hold
     \begin{equation}\label{2eq8}
    \begin{array}{lll}
    \displaystyle \mathbb P\Big(\sup_{ t_0\leq  t\leq t_0+ T} |{\bf x}( t)|\geq \varepsilon_1\Big)
    & = & \displaystyle \mathbb P\Big(\sup_{ t_0\leq  t\leq t_0+T} |{\bf x}( t)|^{2N}\geq \varepsilon_1^{2N}\Big) \leq \\
     &\leq & \displaystyle \mathbb P\Big(\sup_{t_0\leq  t\leq t_0+T} \big(U({\bf z}(t), t)\big)^N\geq \varepsilon_1^{2N}\Big) \leq \\
    & \leq & \displaystyle  \mathbb P\Big(\sup_{ t_0\leq  t\leq  t_0+T} U_N({\bf z}(t), t; T)\geq \varepsilon_1^{2N}\Big) = \\
        & = &\displaystyle \mathbb P\Big(\sup_{ t\geq t_0} U_N({\bf z}(s_ t),s_t;T)\geq \varepsilon_1^{2N}\Big) \leq \\
    & \leq & \displaystyle \frac{  U_N({\bf z}_0,t_0;T)}{\varepsilon_1^{2N}}.
    \end{array}
    \end{equation}
Now we define $T=\mu^{-2N(1-\varkappa})$; then $|U_N({\bf z}_0,t_0;T)|\leq M_N\big[|{\bf x}_0|^{2N}+t_0^{-b N}|{\bf y}_0|^{2N}+\mu^{2N\varkappa}\big]$ as  $|{\bf z}_0|\to 0$ and $\mu\to 0$ with a positive constant $M_N$. Therefore, for all $\varepsilon_1,\varepsilon_2>0$ there exist $\delta>0, \Delta>0$ such that $U_N({\bf z}_0,t_0;T)\leq \varepsilon_1^{2N} \varepsilon_2$ for all $|{\bf z}_0|< \delta$ and $\mu<\Delta$ (if $a=0$, we can choose $|{\bf x}_0|< \delta$ and ${\bf y}_0\in\mathbb R^n$).
Taking into account \eqref{2eq8}, we obtain \eqref{2eq7}. Thus, for any natural $N$ and for all $h>0$ the trivial solution to system \eqref{2eq1} is stable with respect to variables ${\bf x}=(x_1,\dots,x_l)$ under stochastic perturbations on the interval $t_0\leq t\leq  t_0+\mu^{-2N(1-\varkappa)}$ uniformly for $G\in\mathcal A_h$. Note that if $a>0$, then from similar arguments it follows that
$$ \mathbb P\Big( \sup_{t_0\leq t\leq t_0+T} a t^{-b/2}|{\bf y}(t)|\geq \varepsilon_1\Big)\leq \varepsilon_2.$$
\end{proof}

Note that if $(1+t)^{-\beta} G\in\mathcal A_h$ with $\beta>0$ and $G\not\in\mathcal A_h$, then it can be proved that the stochastic stability of the trivial solution ${\bf z}(t)\equiv 0$ holds on the time interval $t_0\leq t\leq t_0+\mu^{(-2+\varkappa)/(1+\beta)}$. In this case, the Lyapunov function has the form $U_\beta({\bf z},t)=U({\bf z},t)+\mu^2 M_\beta (T+t_0-t)^{1+\beta}$ with a positive constant $M_\beta>0$.

\section{Stochastic perturbations of stable autoresonant solution}
\label{sec3}

In this section we study the stability of the autoresonant solution $r_\ast(\tau)$, $\psi_\ast(\tau)$ with asymptotics \eqref{as2} under stochastic perturbations. In system \eqref{eq1} we make the change of variables $r=r_\ast(\tau)+R(\tau)$, $\psi=\psi_\ast(\tau)+\Psi(\tau)$; then for the functions $R(\tau)$, $\Psi(\tau)$ we have the following system of stochastic differential equations
\begin{gather}
\begin{split}
\label{3eq1}
    & dR(\tau)=\big[-\partial_\Psi H(R,\Psi,\tau)-\gamma R\big]d\tau+\mu g_{11}(R,\Psi,\tau) d w_1(\tau), \\
    & d\Psi(\tau)= \partial_R H(R,\Psi,\tau) d\tau+\mu g_{21}(\Psi,\tau)dw_1(\tau) + \mu g_{22}(\tau)dw_2(\tau),
\end{split}
\end{gather}
where $g_{11}=\sigma_1(\tau)[r_\ast(\tau)+R]\sin(\psi_\ast+\Psi)$, $g_{21}=\sigma_1(\tau)\cos(\psi_\ast+\Psi)$, $g_{22}=\sigma_2(\tau)$.
Thus the problem is reduced to the stability analysis of the equilibrium $(0,0)$ of system \eqref{eq4} with respect to stochastic perturbations of the form \eqref{2eq0} with matrix $G=\{g_{i,j}(R,\Psi,\tau)\}$. We have
\begin{Th}
Suppose that the coefficients of system \eqref{eq0} satisfy the inequalities $\lambda>0$, $0<\gamma<1$. Then for all $N\in\mathbb N$, $h, \varepsilon_1, \varepsilon_2 > 0$ there exist $\delta, \Delta>0$ such that $\forall\, \mu<\Delta$, $(\varrho_0, \varphi_0)${\rm :} $(\varrho_0-r_\ast(\tau_0))^2+(\varphi_0-\psi_\ast(\tau_0))^2\leq \delta_0^2$,
$(\sigma_1,\sigma_2)${\rm :}   $\sup_{\tau>\tau_0}\big\{ |\sigma_1(\tau)|\tau+|\sigma_2(\tau)|\big\}\leq h$
the solution $r_\mu(\tau)$, $\psi_\mu(\tau)$ to system \eqref{eq1} with initial data $r_\mu(\tau_0)=\varrho_0$, $\psi_\mu(\tau_0)=\varphi_0$ satisfies the inequalities
\begin{gather}
\label{prest}
\begin{split}
     &   \mathbb P\Big(\sup_{\tau_0 \leq \tau\leq \tau_0+ \mu^{-N}}|\psi_\mu(\tau)-\psi_\ast(\tau)|\geq \varepsilon_1\Big)\leq \varepsilon_2, \\ & \mathbb P\Big(\sup_{\tau_0 \leq \tau\leq \tau_0+ \mu^{-N}} \tau^{-1/2}|r_\mu(\tau)-r_\ast(\tau)|\geq \varepsilon_1\Big)\leq \varepsilon_2.
        \end{split}
\end{gather}
\end{Th}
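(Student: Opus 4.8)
The plan is to reduce the statement to the abstract stochastic stability result of Theorem~\ref{Th2}, applied to the transformed system \eqref{3eq1}. After the shift $r=r_\ast(\tau)+R$, $\psi=\psi_\ast(\tau)+\Psi$, system \eqref{eq1} becomes \eqref{3eq1}, which is exactly of the form \eqref{2eq0} with $\mathbf z=(\Psi,R)$, time variable $t=\tau$ (and $t_0=\tau_0>1$ from Section~\ref{sec1}), drift equal to the right-hand side of the deterministic system \eqref{eq4}, and diffusion matrix $G=\{g_{ij}\}$. I identify the stable variable $\mathbf x=\Psi$ (so $l=1$) and the weighted variable $\mathbf y=R$ (so $m=1$, $n=2$); this reflects the fact that the phase deviation stays bounded while the amplitude deviation is controlled only after multiplication by $\tau^{-1/2}$.

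First I would verify that the Lyapunov function already built in Section~\ref{sec1} satisfies the hypotheses \eqref{2eq2}. Setting $U=4V$ with $V(R,\Psi,\tau)=(\nu\tau)^{-1}[H+\gamma R\Psi/2]$, the two-sided bound in \eqref{eq5} gives precisely $|\mathbf x|^2+a\tau^{-b}|\mathbf y|^2\le U\le A[\,|\mathbf x|^2+a\tau^{-b}|\mathbf y|^2\,]$ with $a=\nu^{-1}$, $b=1$, $A=3$, while the dissipation estimate in \eqref{eq5} yields $\frac{dU}{d\tau}\big|_{\eqref{eq4}}\le -qU$ with $q=\gamma/6$. The two remaining requirements, $|\partial_{\mathbf z}U|^2\le BU$ and $|\partial_{z_i}\partial_{z_j}U|\le C$, follow from the near-quadratic structure of $V$, whose leading part is $\tfrac12\Psi^2+\tfrac{R^2}{2\nu\tau}$: the gradient satisfies $\partial_\Psi V\sim\Psi$, $\partial_R V\sim R/(\nu\tau)$, so $|\partial V|^2\lesssim \Psi^2+(\nu\tau)^{-2}R^2\lesssim V$ for $\tau\ge\tau_0>1$, and the Hessian entries are bounded because the second $R$-derivative carries the decaying weight $(\nu\tau)^{-1}$.

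The crucial step is to check that $G\in\mathcal A_{h'}$ for a suitable $h'$, and this is where the weighting in the noise hypothesis $\sup_{\tau>\tau_0}\{|\sigma_1(\tau)|\tau+|\sigma_2(\tau)|\}\le h$ is exactly what is needed. Because $g_{11}=\sigma_1(\tau)[r_\ast(\tau)+R]\sin(\psi_\ast+\Psi)$ contains the factor $r_\ast(\tau)=\lambda\tau+\mathcal O(1)$, the raw coefficient $|g_{11}|$ grows like $|\sigma_1(\tau)|\tau$; the hypothesis controls precisely this product, giving $|g_{11}|\le C|\sigma_1(\tau)|\tau\le Ch$ on $|\mathbf z|\le\varepsilon_1$, while $|g_{21}|\le|\sigma_1(\tau)|\le|\sigma_1(\tau)|\tau\le h$ and $|g_{22}|=|\sigma_2(\tau)|\le h$. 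Forming $\sigma=GG^\ast/2$ then bounds every entry by $|\sigma_{ij}|\le C'h^2=:h'$, so $G$ lies in the class $\mathcal A_{h'}$; the growth and Lipschitz conditions on $G$ required by \eqref{2eq0} hold routinely since $r_\ast$, $\sin$, $\cos$ are smooth with the stated asymptotics. I expect this verification — matching the physical amplitudes $\sigma_1,\sigma_2$ to the abstract bound on $\sigma=GG^\ast/2$, and seeing why $\sigma_1$ must be weighted by $\tau$ — to be the main obstacle and the conceptual heart of the proof.

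Finally I would invoke Theorem~\ref{Th2} with the data $(a,b,A,q)$ and the class $\mathcal A_{h'}$ above. Given the target exponent $N$, I choose the abstract parameters so that $2N'(1-\varkappa)\ge N$ — for instance $N'=N$ and $\varkappa\le 1/2$ — so that $[\tau_0,\tau_0+\mu^{-N}]\subseteq[\tau_0,\tau_0+T_\mu]$ and the supremum over the shorter interval is dominated by that over $[\tau_0,\tau_0+T_\mu]$. Theorem~\ref{Th2} then furnishes $\delta,\Delta>0$ such that for $\mu<\Delta$, $G\in\mathcal A_{h'}$, $|\mathbf z_0|<\delta$ one has $\mathbb P(\sup|\Psi|\ge\varepsilon_1)\le\varepsilon_2$ and $\mathbb P(\sup \nu^{-1}\tau^{-1/2}|R|\ge\varepsilon_1)\le\varepsilon_2$, the smallness $|\mathbf z_0|<\delta$ being exactly the condition $(\varrho_0-r_\ast(\tau_0))^2+(\varphi_0-\psi_\ast(\tau_0))^2\le\delta_0^2$. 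Undoing the change of variables $\Psi=\psi_\mu-\psi_\ast$, $R=r_\mu-r_\ast$ and rescaling $\varepsilon_1$ by the fixed factor $\nu=\sqrt{1-\gamma^2}\in(0,1)$ converts these two inequalities into the desired estimates \eqref{prest}.
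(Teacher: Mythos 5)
Your proposal is correct and follows essentially the same route as the paper: the paper's proof is exactly this reduction to Theorem~\ref{Th2} via the Lyapunov function $V(R,\Psi,\tau)$ of Section~\ref{sec1} (with $a=\nu^{-1}$, $b=1$), the observation that the weighted bound on $\sigma_1,\sigma_2$ places $G=\{g_{ij}\}$ in the class $\mathcal A_h$, and the change of variables back to $(r_\mu,\psi_\mu)$. You merely make explicit several steps the paper leaves implicit --- the gradient and Hessian bounds on $U$, the passage from $h$ to $h'$ through $\sigma=GG^\ast/2$, and the choice of $\varkappa\leq 1/2$ so that $2N(1-\varkappa)\geq N$ covers the interval $[\tau_0,\tau_0+\mu^{-N}]$ --- all of which are sound.
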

\begin{proof}
Note that system \eqref{eq4} has the Lyapunov function $V(R,\Psi,\tau)$ satisfying \eqref{2eq2} with $a=\nu^{-1}$, $b=1$ and $q=\gamma/3$.  The restrictions of the coefficients $\sigma_i(\tau)$ imply that $G=\{g_{ij}\}\in\mathcal A_h$. If we combine this with Theorem~\ref{Th2}, we get the stochastic stability of the trivial solution to system \eqref{eq4} and the estimates \eqref{Pest} with ${\bf x}=\Psi$ and ${\bf y}=R$. By means of change of variables we derive the inequalities \eqref{prest}.
\end{proof}

Thus, the stability of the isolated autoresonant solution $r_\ast(\tau)$, $\psi_\ast(\tau)$ is preserved in the perturbed system on asymptotically long time interval $\tau_0\leq \tau\leq \tau_0+\mu^{-N}$, $N\geq 1$. Therefore, the perturbation of white noise type with moderate intensity cannot destroy the stability of the capture into parametric autoresonance.

\end{document}